\newcommand{\beq}{\begin{equation}}
\newcommand{\eeq}[1]{\label{#1}\end{equation}}
\newtheorem{theorem}{Theorem}[section]
\newtheorem{remark}[theorem]{Remark}
\newtheorem{lemma}[theorem]{Lemma}
\begin{document}
\title{\sc Two-scale series expansions for travelling wave packets in one-dimensional periodic media}
\author[1]{Kirill D. Cherednichenko}
\affil[1]{Department of Mathematical Sciences, University of Bath, Claverton Down, Bath, BA2 7AY, UK}

\maketitle

\begin{abstract} Starting from the wave equation for a continuous medium with material properties that vary 
periodically, we study a system of recurrence relations for a novel series expansion describing the propagation of wave packets oscillating on the microscale ({\it i.e.} on lengths of the order of the period of the medium) and varying slowly on the macroscale ({\it i.e.} on lengths that contain a large number of periods). The resulting equations contain a version of the geometric optics and a description of the overall energy transport through the medium. We illustrate the developed asymptotic theory using the example of a point pulse propagating through a periodic arrangement of two materials with highly contrasting elastic moduli. 
 \end{abstract}
 
 \
 
 {\small\bf Keywords:} Asymptotic expansion, Periodic composite,  Wave propagation, Multiscale analysis, High-contrast medium, Resonance

\section{Introduction}

In what follows we develop an asymptotic framework for the analysis of wave propagation through periodic composites. We derive formal asymptotic expansions for solutions to hyperbolic differential equations describing propagation of oscillatory wave packets through such composites, when the period of the medium (``microscopic" length) is small in comparison with the typical range over which the spatial envelope of the packet varies (``macroscopic" length). 

Denote by $\varepsilon$ the ratio between the above lengths. We are interested in wave packet solutions $u=u(x,t,\varepsilon)$ ({\it i.e.} $u(\cdot, t,\varepsilon)\in L^2({\mathbb R}^d),$ for all $(t,\varepsilon$)) to the wave equation 
\begin{equation}
u_{tt}-\nabla_x\cdot\bigl(A(x, x/\varepsilon, t)\nabla_xu\bigr)=0,\qquad (x, t)\in{\mathbb R}^{d+1},
\label{orig_eqn}
\end{equation}
where $d\in\{1,2,3\},$ $A=A(x, y, t)$ is a uniformly positive-definite ($A>0$) symmetric ($A^\top=A$)
$y$-periodic smooth matrix function. 
Our aim is to develop a general asymptotic theory for wave packets that oscillate on the scale $\varepsilon$ of the period in the coefficients of (\ref{orig_eqn}) with an amplitude that ``varies slowly'', {\it i.e.} whose gradient is of the order $O(1)$ as $\varepsilon\to 0.$ 

Our formal asymptotic expansion (see (\ref{uform}), (\ref{expansion})) is a generalisation of the series adopted by Allaire, Palombaro and Rauch in \cite{APR_2009}, \cite{APR_2011}, \cite{APR_2013}, where 
solutions to the Cauchy problem for (\ref{orig_eqn}) with ``specially prepared'' initial conditions are 
analysed in the time regimes $t=O(1),$ $t=O(\varepsilon^{-1})$ as $\varepsilon\to0.$ We do not make any assumptions about the initial data, except for those that are sufficient for the existence of solutions to the differential equations in question.

The substitution of the expansion (\ref{expansion}) into (\ref{orig_eqn}) results in a sequence of recurrence relations for the phase functions $\phi,$ $\eta$ and the amplitudes ${\mathcal U}^{(n)}$. The nonlinear ``eikonal" equation (\ref{phi_0_eqn}) for the function $\phi$ is similar to \cite[Eq. 35]{APR_2011} but involves in addition the dependence of its solution on the ``quasimomentum''  $\varkappa$ in the integral representation (\ref{uform}).
It turns out that the leading-order amplitude ${\mathcal U}^{(0)}$ has the form (\ref{U0form}), where the variables $x$ and $y$ are separated: the slowly varying factor $u^{(0)}$ satisfies the transport equation 
(\ref{u0_final}), while the rapidly oscillating factor $U^{(0)}$ satisfies the eigenvalue problem 
(\ref{eigen0})--(\ref{eigen0_norm}) on the period cell of the composite. Our analysis of the leading-order term of the expansion (\ref{expansion}) for the``Gelfand transform'' $\widehat{u},$ related to the solution $u$ via the formula (\ref{uform}), is supplemented by the use of the method of stationary phase, which provides the general asymptotic form (see Section \ref{asymptotic_form}) for a rapidly oscillating wave train with a slowly varying amplitude envelope. 

We complement our analysis by the derivation of the propagation properties of the 
quasimomentum $\varkappa$, the ``local wavenumber'' $k$ and the ``energy'' 
\[
\int_{{\mathbb R}^d}\vert u\vert^2,
\]
see Sections \ref{prop_k}, \ref{prop_amp}. We show that these propagate with the ``group velocity'' 
$\Omega'(\varkappa)$ calculated on the basis of the dispersion relation $\Omega=\Omega(\varkappa)$ at a given ``macroscopic location'' $x,$ see (\ref{phi_0_eqn}), where $\Omega$ is the eigenvalue in (\ref{eigen0}).

Finally, in Section \ref{hc_section} we consider a particular case of our general asymptotic formula when the travelling wave has the form of a $\delta$-function pulse and apply it to the setting of a periodic medium with high contrast (or ``large coupling''). As in this case we can take advantage of an explicit asymptotic formulae for the dispersion relations, we demonstrate, on the basis of the general analysis of the present article, that the pulse propagation through such medium is controlled in a quantitatively explicit way, showing the effect of each component of the composite on the wave.

\section{One-dimensional formulation}

We focus on the (1+1)-version ({\it i.e.} $d=1$) of the equation (\ref{orig_eqn}), where $A$ is a scalar, which we will label by $a$: 
\begin{equation}
u_{tt}-\bigl(a(x, x/\varepsilon, t)u_x\bigr)_x=0.
\label{1D_orig_eqn}
\end{equation}
Here $a=a(x, y, t)$ is a positive and bounded scalar function, 1-periodic 
in $y,$  with a bounded inverse. The equation (\ref{1D_orig_eqn}) may describe anti-plane shear waves in an elastic medium that in general is locally periodic (showing an explicit dependence of the coefficient $a$ on the spatial variable $x$), possibly with time-dependent material properties.
 The asymptotic framework presented below can be generalised in a straightforward way to arbitrary ({\it i.e.} non-polarised) displacement fields in a stratified elastic medium.

Note first that $u=u^\varepsilon$ can be written in the form 
\beq
u^\varepsilon(x,t)=\frac{1}{\sqrt{2\pi\varepsilon}}\int_{-\pi}^{\pi}\widehat{u}\biggl(\frac{x}{\varepsilon}, t, \varkappa, \varepsilon\biggr)\exp\Bigl({\rm i}\varkappa\frac{x}{\varepsilon}\Bigr)d\varkappa,
\eeq{uform}
where $\widehat{u}$ is the scaled Gelfand transform (see \cite{Gelfand}, \cite{BLP}, \cite{GrandePreuve}) of the function $u=u(x, t)$ with respect to the spatial variable $x:$ 
\beq
\widehat{u}(y, t, \varkappa,\varepsilon)=\sqrt{\frac{\varepsilon}{2\pi}}\sum_{n\in{\mathbb Z}}u\bigl(\varepsilon(y+n), t\bigr)
\exp\bigl(-{\rm i}(y+n)\varkappa\bigr),\ \ \ \ \varkappa\in [-\pi,\pi)
\eeq{Floquet_transform}
The equation (\ref{1D_orig_eqn}) implies that for each $\varkappa\in[-\pi, \pi)$ the function 
$\widehat{u}(x/\varepsilon, t, \varkappa, \varepsilon),$ which is $\varepsilon$-periodic in $x,$ satisfies the following equation on the ``$\varepsilon$-period cell" $\varepsilon Q=(0, \varepsilon):$
\beq
\varepsilon^2\widehat{u}_{tt}-\biggl(\varepsilon\frac{d}{dx}+{\rm i}\varkappa\biggr)a(x, x/\varepsilon, t)
\biggl(\varepsilon\frac{d}{dx}+{\rm i}\varkappa\biggr)\widehat{u}=0,
\eeq{uhateqn}
which we analyse in what follows. 

\begin{remark} 
For each value of $\varkappa\in[-\pi, \pi),$ both real and imaginary parts of the ``elementary Floquet solution''
\beq
u(x, t, \varkappa, \varepsilon):=\widehat{u}\biggl(\frac{x}{\varepsilon}, t, \varkappa, \varepsilon\biggr)\exp\Bigl({\rm i}\varkappa\frac{x}{\varepsilon}\Bigr)
\eeq{elem_solution}
are solutions to the original equation (\ref{1D_orig_eqn}).
\label{remark_Floquet}
\end{remark}
The function $\widehat{u}(x/\varepsilon, t, \varkappa, \varepsilon)$ is a convenient representation of the solution 
$u^\varepsilon$ in the case when it is time-harmonic, {\it i.e.} when there exists $\omega\in{\mathbb R}$ such that ({\it cf.} (\ref{uform}))
\begin{equation}
u^\varepsilon(x,t)=\frac{1}{\sqrt{2\pi\varepsilon}}\exp({\rm i}\omega t)\int_{-\pi}^{\pi}\widetilde{u}\biggl(\frac{x}{\varepsilon}, \varkappa, \varepsilon, \omega\biggr)\exp\Bigl({\rm i}\varkappa\frac{x}{\varepsilon}\Bigr)d\varkappa,
\label{time_harmonic}
\end{equation}
where $\widetilde{u}(x/\varepsilon, \varkappa, \varepsilon, \omega)$ is the Gelfand transform (see (\ref{Floquet_transform})) of the initial value $u(x, 0, \varkappa, \varepsilon),$ and (\ref{time_harmonic}) simply describes its constant-frequency time evolution. The transform (\ref{Floquet_transform}) results in a decomposition of the spatial part $-(au_x)_x$ of the differential operator in (\ref{1D_orig_eqn}) into a ``direct integral'' of operators on the cell $Q,$ represented by the spatial $\varkappa$-parametrised part of (\ref{uhateqn}) with 
$\varepsilon(d/dx)$ replaced by $d/dy,$ showing that the representation of the solution by $\varkappa$-dependent components is constant in time. The component of the solution in each $\varkappa$-fibre at the frequency $\omega$ can then be analysed, resulting in the dynamic picture for all times $t$. One could argue then that it should be possible to study the behaviour of solutions to (\ref{1D_orig_eqn})  by applying the inverse Fourier transform in time:
\[
u^\varepsilon(x,t)=\frac{1}{\sqrt{2\pi\varepsilon}}\int_{-\infty}^\infty \exp({\rm i}\omega t)\int_{-\pi}^{\pi}\widetilde{u}\biggl(\frac{x}{\varepsilon}, \varkappa, \varepsilon, \omega\biggr)\exp\Bigl({\rm i}\varkappa\frac{x}{\varepsilon}\Bigr)d\varkappa d\omega.
\] 
Changing the order of integration in the last identity (which can be justified under appropriate assumptions on the data in (\ref{1D_orig_eqn})), we obtain
\begin{equation}
u^\varepsilon(x,t)=\frac{1}{\sqrt{2\pi\varepsilon}}\int_{-\pi}^{\pi}\int_{-\infty}^\infty \exp({\rm i}\omega t)\widetilde{u}\biggl(\frac{x}{\varepsilon}, \varkappa, \varepsilon, \omega\biggr) d\omega\exp\Bigl({\rm i}\varkappa\frac{x}{\varepsilon}\Bigr)d\varkappa,
\label{Fourier_rep}
\end{equation}
which is an alternative form of (\ref{uform}): one simply has 
\begin{equation}
\widehat{u}\biggl(\frac{x}{\varepsilon}, t, \varkappa, \varepsilon\biggr)=
\int_{-\infty}^\infty \exp({\rm i}\omega t)\widetilde{u}\biggl(\frac{x}{\varepsilon}, \varkappa, \varepsilon, \omega\biggr) d\omega
\label{link}
\end{equation}

The formula (\ref{Fourier_rep}) demonstrates how amplitudes $\widetilde{u}(x/\varepsilon, \varkappa, \varepsilon, \omega)$ at each frequency contribute to the solution $u^\varepsilon,$ and thus suggests a formula for solutions of (\ref{1D_orig_eqn}) of the wave-packet type: for a (smooth) function $\psi:{\mathbb R}\to{\mathbb R}$ with compact support, one should instead consider  
\begin{equation}
\frac{1}{\sqrt{2\pi\varepsilon}}\int_{-\pi}^{\pi}\int_{-\infty}^\infty \exp({\rm i}\omega t)\widetilde{u}\biggl(\frac{x}{\varepsilon}, \varkappa, \varepsilon, \omega\biggr)\psi(\omega)d\omega\exp\Bigl({\rm i}\varkappa\frac{x}{\varepsilon}\Bigr)d\varkappa,
\label{wave_packet}
\end{equation}
which obviously also solves (\ref{1D_orig_eqn}) and whose Gelfand transform is given by 
\begin{equation}
\int_{-\infty}^\infty \exp({\rm i}\omega t)\widetilde{u}\biggl(\frac{x}{\varepsilon}, \varkappa, \varepsilon, \omega\biggr)\psi(\omega)d\omega.
\label{omega_Gelfand}
\end{equation}
A comparison with (\ref{link}) immediately reveals the challenge one faces when wishing to understand the spatial structure of the wave-packet (\ref{wave_packet}): the time evolution of the initial value 
\[
\int_{-\infty}^\infty\widetilde{u}\biggl(\frac{x}{\varepsilon}, \varkappa, \varepsilon, \omega\biggr)\psi(\omega)d\omega
\]
of (\ref{omega_Gelfand}) for a fixed $\varkappa$ is obscured by the ``mixing'' of the amplitudes 
$\widetilde{u}(x/\varepsilon, \varkappa, \varepsilon, \omega)$ corresponding to different values of frequency $\omega.$ This, in turn, results in the loss of the dispersive picture for the Gelfand components $\widehat{u}(x/\varepsilon, t, \varkappa, \varepsilon)$ or, put simply, their propagation speeds at various times and locations, $t$ and $x.$  However, it appears that the dependence of $\widehat{u}$ on the parameter $\varepsilon$ has not been fully exploited yet, and perhaps the time evolution in (\ref{elem_solution}) can be separated from its spatial structure in an asymptotically explicit way for small values of $\varepsilon.$
This motivates us to look for new asymptotic ways  
to represent the scale-time separation in (\ref{elem_solution}). 


In our search of a suitable asymptotic tool that would address the above problem, we have been motived by two existing methods.
On the one hand, it is known that in the case of dispersive homogeneous media \cite{Whitham}, when the setup of the operator $-(au_x)_x,$ where $a$ is constant, is generalised to higher-order homogeneous expressions $a\partial_x^{2n},$ $n\in{\mathbb N},$
the analysis at fixed frequencies is ``lifted'' to the global dynamics by applying the Fourier transform in $x$ and utilising an inverse transform with respect to the wavenumber $k$ related to the frequency $\omega$ via $\omega=\pm ak^n,$ resulting in an asymptotic description of the wave dispersion for large times. 

On the other hand, the presence of a small parameter $\varepsilon$ corresponding to the ratio of lengths (that is, of the period of the medium and a macroscopic scale) brings up an analogy with the WKB expansion in high-frequency wave propagation \cite{Babich_Buldyrev, Babich_Buldyrev_Molotkov}, which can be viewed as a generalisation of the plane-wave ansatz to the case of inhomogeneous media.

The above considerations lead us to an alternative representation for (\ref{elem_solution}), which elucidates the dependence of the function $\widehat{u}$ on the parameter $\varepsilon,$ albeit at the price of considering two-scale power series in $\varepsilon.$  
We next introduce such a representation. 



\section{Multiscale version of the WKB asymptotic expansion}
\label{time_independent}


We are looking to determine a more specific form of the solution $\widehat{u}$ to (\ref{uhateqn}), as a (formal) asymptotic expansion in powers of the small parameter $\varepsilon$:
\begin{equation}
\widehat{u}(x, {t},\varkappa,\varepsilon)=\exp\biggl(-\frac{\rm i}{\varepsilon}{\phi}(x, {t}, \varkappa)-{\rm i}{\eta}(x, {t}, \varkappa)\biggr)
\sum_{n=0}^\infty\varepsilon^n{\mathcal U}^{(n)}\Bigl(x,\frac{x}{\varepsilon}, {t}, \varkappa\Bigr),
\label{expansion}
\end{equation}
where the phase functions $\phi,$ $\eta$ are  
real-valued and the amplitude terms ${\mathcal U}^{(n)}(x, y, {t},\varkappa),$ $n=0,1,\dots$ are $Q$-periodic with respect to $y.$

\begin{remark}  
\label{nondeg_remark}
Consider the phase function in (\ref{expansion}):
\[
\Phi(x, {t}, \varkappa, \varepsilon):=\varepsilon^{-1}{\phi}(x, {t}, \varkappa)+{\eta}(x, {t}, \varkappa).
\]
While writing $\widehat{u}$ in the form (\ref{expansion}), we make an implicit ``slow time'' assumption that the phase velocity $\Phi_x^{-1}\Phi_{t}$ is much smaller than
$\varepsilon^{-1}$ as $\varepsilon\to 0.$  In other words, the ``typical relaxation time'' of the solution $u$ is much larger than 
the typical time it takes for the solution to travel across the $\varepsilon$-periodicity cell. 
\end{remark}

In the following we always assume that
the derivatives $\phi_x,$ $\eta_x,$ which describe the
``local wavenumber'' of the solution (\ref{expansion}), are uniformly bounded below in absolute value for $(x, t , \varkappa)
\in{\mathbb R}\times{\mathbb R}\times [-\pi, \pi).$ Under this assumption the requirement of Remark \ref{nondeg_remark} is satisfied, and in fact for small values of $\varepsilon$ the phase velocity 
$\Phi_x^{-1}\Phi_{t}$ is of the order $Q(1).$
 
   To simplify the presentation of the asymptotic analysis, we shall assume that the coefficient $a=a(x, y, t)$ is independent of the first, ``slow", variable $x.$ The discussion of the general case of  $x$-dependence (representing a ``locally periodic" medium)
   carries through in a similar way, subject to minor modifications. As we shall see in what follows, in the particular case when the coefficient $a$ is independent of $x,$ the leading-order phase  ${\phi}$ in (\ref{expansion}) 
 is also independent of $x.$ Furthermore,  when the coefficient $a$ is independent of both 
 $x$ and $t,$ the leading-order amplitude
 ${\mathcal U}^{(0)}$ in (\ref{expansion}) has the form of a travelling wave propagating with velocity $\Omega'(\varkappa),$ and therefore (\ref{expansion}) describes the asymptotic behaviour of the unitary semigroup for (\ref{1D_orig_eqn}) by representing its evolution in the ``fibres'' parametrised by $\varkappa\in[-\pi, \pi).$

\subsection{Eikonal equation for the phase function}

Substituting the expansion (\ref{expansion})
into the equation (\ref{uhateqn}) yields a
system of recurrence relations for the amplitudes ${\mathcal U}^{(n)}={\mathcal U}^{(n)}(\varepsilon y, y, {t}, \varkappa),$ $n=0,1,\dots,$ 
and the phase 
coefficients $\phi=\phi(\varepsilon y, {t}, \varkappa),$ $\eta=\eta(\varepsilon y, {t}, \varkappa),$
 in the series (\ref{expansion}).
Here $\varepsilon y=x$ and $y=x/\varepsilon$ are the ``slow'' and ``fast'' variables, respectively, in the spatial behaviour of the solution $u$ to the equation (\ref{1D_orig_eqn}).

In particular, at the order $\varepsilon^0$ we obtain 
\beq
-(\phi_{t})^2{\mathcal U}^{(0)}(x, y, t, \varkappa)-\biggl(\frac{d}{dy}+{\rm i}\varkappa-{\rm i}{\phi}_x\biggr)
a(y, t)\biggl(\frac{d}{dy}+{\rm i}\varkappa-{\rm i}{\phi}_x\biggr){\mathcal U}^{(0)}(x, y, t, \varkappa)=0,
\eeq{eigen00}
{\it i.e.} on the microscale, in the vicinity of the point $x$ at time ${t},$ the leading-order amplitude ${\mathcal U}^{(0)}$ behaves 
as a $Q$-periodic eigenfunction of the differential operator 
\[
-\biggl(\frac{d}{dy}+{\rm i}\varkappa-{\rm i}{\phi}_x\biggr)a(y,t)\biggl(\frac{d}{dy}+{\rm i}\varkappa-{\rm i}{\phi}_x\biggr),
\] 
corresponding to the value $\varkappa$ of the ``quasimomentum''. 


Hence we write
\beq
{\mathcal U}^{(0)}(x, y, {t}, \varkappa)=u^{(0)}(x, {t}, \varkappa)
U^{(0)}\bigl(y, \Omega(t, \varkappa-{\phi}_x)\bigr),
\eeq{U0form}
and the function ${\phi}$ solves the equation 
\beq
\phi_{t}=\pm\Omega\bigl(t, \varkappa-{\phi}_x\bigr).
\eeq{phi_0_eqn}
where $U^{(0)}=
U^{(0)}\bigl(y, \Omega)$ is a normalised $Q$-periodic eigenfunction in (\ref{eigen00}) corresponding to the eigenvalue $\Omega^2=\Omega^2\bigl(t, \varkappa-{\phi}_x\bigr),$ {\it i.e.} one has
\begin{align}
-\biggl(\frac{d}{dy}+{\rm i}\xi\biggr)
a(y,t)\biggl(\frac{d}{dy}+{\rm i}\xi\biggr)U^{(0)}(y)=\Omega^2(t,\xi)U^{(0)}(y),\label{eigen0}\\[0.2em]
\int_0^1\bigl\vert U^{(0)}(y)\bigr\vert^2dy=1,\label{eigen0_norm}
\end{align}
where $\xi=\varkappa-\phi_x$ plays the role of a parameter. Since for fixed $\xi\in[-\pi, \pi)$ and $\Omega^2(t,\xi)$ the problem (\ref{eigen0})--(\ref{eigen0_norm}) has at most one solution, the eigenfunctions $U^{(0)}$ can be parametrised by $\Omega(t,\xi)$ (at least locally in $\xi$), which is reflected in the notation for $U^{(0)}$ so it depends on $t, x, \varkappa$ via the function $\Omega=\Omega(t, \varkappa-{\phi}_x(t, x,\varkappa))$ only.

In what follows, we choose $\Omega$ to be the positive square root of the eigenvalue $\Omega^2$ in 
(\ref{eigen0}). In formulae containing ``$\pm$" or ``$\mp$", the upper sign corresponds to the choice of ``$+$" in (\ref{phi_0_eqn}) and the lower sign corresponds to the choice of ``$-$''.

Differentiating (\ref{phi_0_eqn}) with respect to ${t}$ and $x$ in turn, we obtain
\begin{equation}
\phi_{tt}=\pm\Omega_t(t, \varkappa-{\phi}_x)\mp{\phi}_{x{t}}\Omega_\xi(t, \varkappa-{\phi}_x)
\label{eq1}
\end{equation}
and 
\begin{equation}
\phi_{tx}=\pm\bigl[\Omega(t, \varkappa-{\phi}_x)\bigr]_x=:\pm\Omega_x,
\label{eq2}
\end{equation}
respectively. Henceforth, the values of $\Omega=\Omega(t,\xi)$ and its derivatives are always  taken 
at the point \linebreak $\varkappa-{\phi}_x(x, {t}, \varkappa).$  Combining (\ref{eq1}) and (\ref{eq2})  yields
\beq
\phi_{tt}=\pm\Omega_t-\Omega_\xi\Omega_x.
\eeq{aux}
We will use this result to simplify the equation (\ref{amplitude_transformed}) in the next section.

\begin{remark}
Using the fact that 
\begin{equation}
\Omega_x=-\Omega_\xi{\phi}_{xx},
\label{second_fact}
\end{equation}
 we obtain a quasilinear hyperbolic equations on ${\phi},$ with local wave speed $|\Omega_\xi|:$ 
\begin{equation}
\phi_{tt}=(\Omega_\xi)^2{\phi}_{xx}\pm\Omega_t.
\label{hyperbolic}
\end{equation}
\end{remark}




\subsection{Transport equation for the amplitude envelope}

Continuing the procedure described at the beginning of the previous section, we collect the terms of order $\varepsilon^1$, which yields 
\begin{align*}
&-\phi_{t}^2\,{\mathcal U}^{(1)}-\biggl(\frac{d}{dy}+{\rm i}\varkappa-{\rm i}{\phi}_x\biggr)
a(y,t)\biggl(\frac{d}{dy}+{\rm i}\varkappa-{\rm i}{\phi}_x\biggr){\mathcal U}^{(1)}\\[0.4em]
&=({\rm i}{\eta}_{{t}{t}}+2\phi_{t}\eta_{t})\,{\mathcal U}^{(0)}+2{\rm i}
\phi_{t}\,{\mathcal U}^{(0)}_{t}-{\rm i}{\phi}_{xx}a(y,t){\mathcal U}^{(0)}
\\[0.3cm]
&
+\biggl\{a(y,t)\biggl(\frac{d}{dy}+{\rm i}\varkappa-{\rm i}{\phi}_x\biggr)+
\biggl(\frac{d}{dy}+{\rm i}\varkappa-{\rm i}{\phi}_x\biggr)a(y,t)\biggr\}{\mathcal U}^{(0)}_x\\[0.2cm]
&-{\rm i}{\eta}_x\biggl\{a(y,t)\biggl(\frac{d}{dy}+{\rm i}\varkappa-{\rm i}{\phi}_x\biggr)+
\biggl(\frac{d}{dy}+{\rm i}\varkappa-{\rm i}{\phi}_x\biggr)a(y,t)\biggr\}{\mathcal U}^{(0)}.
\end{align*}
We re-write this equation using the representation (\ref{U0form}), as follows:
\begin{equation}
\begin{aligned}
-\phi_{t}^2\,{\mathcal U}^{(1)}&-\biggl(\frac{d}{dy}+{\rm i}\varkappa-{\rm i}{\phi}_x\biggr)
a(y,t)\biggl(\frac{d}{dy}+{\rm i}\varkappa-{\rm i}{\phi}_x\biggr){\mathcal U}^{(1)}\\[0.3em]
&=({\rm i}{\eta}_{{t}{t}}+2\phi_{t}\eta_{t})u^{(0)}U^{(0)}+2{\rm i}
\phi_{t}u^{(0)}_{t} U^{(0)}-2{\rm i}\phi_{t}u^{(0)}U^{(0)}_\Omega\Omega_\xi
{\phi}_{x{t}}\\[0.4em]
&-{\rm i}{\phi}_{xx}u^{(0)}a(y,t)U^{(0)}+\biggl\{a(y,t)\biggl(\frac{d}{dy}+{\rm i}\varkappa-{\rm i}{\phi}_x\biggr)+
\biggl(\frac{d}{dy}+{\rm i}\varkappa-{\rm i}{\phi}_x\biggr)a(y,t)\biggr\}u^{(0)}_xU^{(0)}\\[0.3em]
&+\biggl\{a(y,t)\biggl(\frac{d}{dy}+{\rm i}\varkappa-{\rm i}{\phi}_x\biggr)+
\biggl(\frac{d}{dy}+{\rm i}\varkappa-{\rm i}{\phi}_x\biggr)a(y,t)\biggr\}u^{(0)}U^{(0)}_\Omega{\Omega_\xi}
{\phi}_{xx}\\[0.3em]
&-{\rm i}{\eta}_x\biggl\{a(y,t)\biggl(\frac{d}{dy}+{\rm i}\varkappa-{\rm i}{\phi}_x\biggr)+
\biggl(\frac{d}{dy}+{\rm i}\varkappa-{\rm i}{\phi}_x\biggr)a(y,t)\biggr\}u^{(0)}U^{(0)},
\end{aligned}
\label{U1eqn}
\end{equation}
where $U^{(0)}_\Omega$ stands for the derivative of $U^{(0)}(y, \Omega)$ with respect to $\Omega.$ 

We treat (\ref{U1eqn}) as an equation for ${\mathcal U}^{(1)}$ and so seek the condition of solvability for it.
To this end, we multiply (\ref{U1eqn}) by the complex conjugate of the function $U^{(0)}$ found at the previous step and integrate the result with respect to the variable $y\in Q=(0,1).$  Using the eigenfunction equation 
(\ref{eigen0}) 
yields
\begin{equation}
\begin{aligned}
({\rm i}{\phi}_{{t}{t}}&+2\phi_{t}\eta_{t})u^{(0)}+
2{\rm i}\phi_{t}u^{(0)}_{t}-2{\rm i}\phi_{t}u^{(0)}{\Omega_\xi}
{\phi}_{x{t}}\int_0^1U^{(0)}_\Omega\overline{U^{(0)}}-{\rm i}{\phi}_{xx}\int_0^1a(y,t)\bigl\vert U^{(0)}\bigr\vert^2\\[0.3em]
&+u^{(0)}_x\int_0^1\biggl\{a(y,t)\biggl(\frac{d}{dy}+{\rm i}\varkappa-{\rm i}{\phi}_x\biggr)+
\biggl(\frac{d}{dy}+{\rm i}\varkappa-{\rm i}{\phi}_x\biggr)a(y,t)\biggr\}U^{(0)}\overline{U^{(0)}}\\[0.3em]
&-u^{(0)}{\Omega_\xi}{\phi}_{xx}\int_0^1\biggl\{a(y,t)\biggl(\frac{d}{dy}+{\rm i}\varkappa-{\rm i}{\phi}_x\biggr)+
\biggl(\frac{d}{dy}+{\rm i}\varkappa-{\rm i}{\phi}_x\biggr)a(y,t)\biggr\}U^{(0)}_\Omega\overline{U^{0}}\\[0.3em]
&-{\rm i}{\eta}_xu^{(0)}\int_0^1\biggl\{a(y,t)\biggl(\frac{d}{dy}+{\rm i}\varkappa-{\rm i}{\phi}_x\biggr)+
\biggl(\frac{d}{dy}+{\rm i}\varkappa-{\rm i}{\phi}_x\biggr)a(y,t)\biggr\}U^{(0)}\overline{U^{(0)}}=0.
\end{aligned}
\label{solv_cond}
\end{equation}

We would like to separate the real and imaginary parts of (\ref{solv_cond}). To this end, the following observation proves useful.
\begin{lemma}
The expression
\[
-{\rm i}\int_0^1U^{(0)}_\Omega\overline{U^{(0)}}
\]
is real-valued.
\end{lemma}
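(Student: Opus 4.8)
\emph{Proof plan.} The statement is a direct consequence of the normalisation (\ref{eigen0_norm}) once one observes that $\Omega$ is a \emph{real} parameter. Since $a(y,t)$ is real, positive and bounded with bounded inverse, the fibre operator in (\ref{eigen0}) is self-adjoint and non-negative on $L^2(Q)$ with compact resolvent; hence every eigenvalue $\Omega^2(t,\xi)$ is real and non-negative, and the square root $\Omega$ chosen above is real. As noted after (\ref{eigen0_norm}), for fixed $t$ and $\xi$ the pair (\ref{eigen0})--(\ref{eigen0_norm}) determines $U^{(0)}$ up to a unimodular constant, so there is a local branch $\Omega\mapsto U^{(0)}(\cdot,\Omega)$ of normalised eigenfunctions depending smoothly on $\Omega$; this is what gives meaning to $U^{(0)}_\Omega$.

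With this in hand, I would simply differentiate the identity $\int_Q\bigl|U^{(0)}(y,\Omega)\bigr|^2\,dy=1$ with respect to $\Omega$ (differentiation under the integral sign being legitimate on the bounded interval $Q$ with smooth integrand). Since $\partial_\Omega\bigl|U^{(0)}\bigr|^2=U^{(0)}_\Omega\overline{U^{(0)}}+U^{(0)}\overline{U^{(0)}_\Omega}=2\,{\rm Re}\bigl(U^{(0)}_\Omega\overline{U^{(0)}}\bigr)$, this yields
\[
{\rm Re}\int_Q U^{(0)}_\Omega\,\overline{U^{(0)}}\,dy=0,
\]
i.e.\ $\int_Q U^{(0)}_\Omega\overline{U^{(0)}}\,dy$ is purely imaginary, and therefore $g_1=-{\rm i}\int_Q U^{(0)}_\Omega\overline{U^{(0)}}$ is real-valued.

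The argument is otherwise immediate; the only (mild) point needing attention is the existence of a differentiable branch of eigenfunctions parametrised by $\Omega$, which rests on the simplicity of the eigenvalue $\Omega^2(t,\xi)$ recorded in the text. It is also worth noting that the conclusion does not depend on which smooth phase normalisation of $U^{(0)}$ one adopts: replacing $U^{(0)}$ by $e^{{\rm i}\theta(\Omega)}U^{(0)}$ with $\theta$ real changes $\int_Q U^{(0)}_\Omega\overline{U^{(0)}}\,dy$ only by the purely imaginary term ${\rm i}\theta'(\Omega)$, which leaves $g_1$ real.
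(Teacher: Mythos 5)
Your argument is correct and is essentially identical to the paper's proof: both differentiate the normalisation $\int_Q|U^{(0)}|^2=1$ with respect to $\Omega$ to conclude that $\Re\int_QU^{(0)}_\Omega\overline{U^{(0)}}=0$, so the integral is purely imaginary and $g_1$ is real. The additional remarks on the existence of a smooth branch and the phase-normalisation freedom are sensible but not part of the paper's (shorter) proof.
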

\begin{proof}
Notice that
\begin{align*}
0&=\frac{d}{d\Omega}\int_0^1U^{(0)}\overline{U^{(0)}}=\int_0^1U^{(0)}_\Omega\overline{U^{(0)}}+\int_0^1U^{(0)}\overline{U^{(0)}_\Omega}
\\[0.3em]
&=\int_0^1U^{(0)}_\Omega\overline{U^{(0)}}+\overline{\int_0^1\overline{U^{(0)}}U^{(0)}_\Omega}=2\Re\int_0^1U^{(0)}_\Omega\overline{U^{(0)}},
\end{align*}
which immediately yields the claim.
\end{proof}

Considering the real part of (\ref{solv_cond}) results in an equation for $\eta$
as follows:
\begin{equation}
\begin{aligned}
2\phi_{t}\eta_{t}&+2\phi_{t}{\Omega_\xi}{\phi}_{x{t}}\biggl(-{\rm i}\int_0^1U^{(0)}_\Omega\overline{U^{(0)}}\biggr)
\\[0.3em]
&-{\Omega_\xi}{\phi}_{xx}\Re\int_0^1\biggl\{a(y,t)\biggl(\frac{d}{dy}+{\rm i}\varkappa-{\rm i}{\phi}_x\biggr)+
\biggl(\frac{d}{dy}+{\rm i}\varkappa-{\rm i}{\phi}_x\biggr)a(y,t)\biggr\}U^{(0)}_\Omega\overline{U^{0}}\\[0.3em]
&+2{\eta}_x\Im\int_0^1a(y,t)\biggl(\frac{d}{dy}+{\rm i}\varkappa-{\rm i}{\phi}_x\biggr)U^{(0)}\overline{U^{(0)}}=0,
\end{aligned}
\label{real_part}
\end{equation}
while taking the imaginary part of (\ref{solv_cond}) yields the following equation for $u^{(0)}:$
\begin{equation}
\begin{aligned}
{\phi}_{{t}{t}}u^{(0)}&+2\phi_{t}u^{(0)}_{t}-{\phi}_{xx}\int_0^1a(y,t)\bigl\vert U^{(0)}\bigr\vert^2
+2u^{(0)}_x\Im\int_0^1a(y,t)\biggl(\frac{d}{dy}+{\rm i}\varkappa-{\rm i}{\phi}_x\biggr)U^{(0)}\overline{U^{(0)}}\\[0.3em]
&-{\Omega_\xi}{\phi}_{xx}u^{(0)}\Im\int_0^1\biggl\{a(y,t)\biggl(\frac{d}{dy}+{\rm i}\varkappa-{\rm i}{\phi}_x\biggr)+
\biggl(\frac{d}{dy}+{\rm i}\varkappa-{\rm i}{\phi}_x\biggr)a(y,t)\biggr\}U^{(0)}_\Omega\overline{U^{0}}.
\end{aligned}
\label{imag_part}
\end{equation}
Notice that 
\begin{align*}
\biggl(\Im\int_0^1a(y,t)\biggl(\frac{d}{dy}+{\rm i}\varkappa-{\rm i}{\phi}_x\biggr)U^{(0)}\overline{U^{(0)}}\biggr)_x=&
-{\phi}_{xx}\int_0^1a(y,t)\bigl\vert U^{(0)}\bigr\vert^2\\[0.3em]
&-{\Omega_\xi}{\phi}_{xx}\Im\int_0^1a(y,t)\biggl(\frac{d}{dy}+{\rm i}\varkappa-{\rm i}{\phi}_x\biggr)U^{(0)}_\Omega\overline{U^{(0)}}\\[0.3em]
&-{\Omega_\xi}{\phi}_{xx}\Im\int_0^1a(y,t)\biggl(\frac{d}{dy}+{\rm i}\varkappa-{\rm i}{\phi}_x\biggr)U^{(0)}\overline{U^{(0)}_\Omega},
\end{align*}
which is the sum of the third and fifth terms in (\ref{imag_part}).
Hence, we obtain
\begin{align*}
{\phi}_{{t}{t}}u^{(0)}+2\phi_{t}u^{(0)}_{t}
&+2u^{(0)}_x\Im\int_0^1a(y,t)\biggl(\frac{d}{dy}+{\rm i}\varkappa-{\rm i}{\phi}_x\biggr)U^{(0)}\overline{U^{(0)}}\\[0.3em]
&+u^{(0)}\biggl(\Im\int_0^1a(y,t)\biggl(\frac{d}{dy}+{\rm i}\varkappa-{\rm i}{\phi}_x\biggr)U^{(0)}\overline{U^{(0)}}\biggr)_x=0,
\end{align*}
or, after multiplication by $u^{(0)}$ and using the product rule,
\beq
\Bigl[\bigl(u^{(0)}\bigr)^2{\phi}_{t}\Bigr]_{t}+\biggl[\bigl(u^{(0)}\bigr)^2\Im\int_0^1a(y,t)\biggl(\frac{d}{dy}+{\rm i}\varkappa-{\rm i}{\phi}_x\biggr)U^{(0)}\overline{U^{(0)}}\biggr]_x=0,
\eeq{amplitude0}
which is a transport equation for $\bigl(u^{(0)}\bigr)^2.$
Finally, we use the following statement.
\begin{lemma}
\label{lemma_coeff}
The formula
\beq
\Im\int_0^1a(y,t)\biggl(\frac{d}{dy}+{\rm i}\varkappa-{\rm i}{\phi}_x\biggr)U^{(0)}\overline{U^{(0)}}=\frac{1}{2}\bigl(\Omega^2\bigr)_\xi,
\eeq{lemma_coeff_form}
holds, where the right-hand side is evaluated at $\varkappa-{\phi}_x(x,{t},\varkappa).$
\end{lemma}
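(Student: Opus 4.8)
The plan is to recognise (\ref{lemma_coeff_form}) as the Feynman--Hellmann (``group velocity'') identity for the one-parameter family of operators
\[
L_\xi:=-\biggl(\frac{d}{dy}+{\rm i}\xi\biggr)a(y,t)\biggl(\frac{d}{dy}+{\rm i}\xi\biggr),\qquad \xi:=\varkappa-\varphi^{(0)}_x,
\]
acting on $Q$-periodic functions in $L^2(Q)$, and to extract it by differentiating the Rayleigh quotient in $\xi$. Denote by $\langle\,\cdot\,,\,\cdot\,\rangle$ the $L^2(Q)$ inner product (conjugate-linear in its second argument) and write $U:=U^{(0)}$. Since $a(\cdot,t)$ is real, bounded and bounded below, $L_\xi$ is self-adjoint with compact resolvent, and (\ref{eigen0})--(\ref{eigen0_norm}) give $\Omega^2(t,\xi)=\langle L_\xi U,U\rangle$. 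By the simplicity of the eigenvalue $\Omega^2(t,\xi)$ — already invoked in the paragraph following (\ref{eigen0_norm}) — analytic perturbation theory makes $\xi\mapsto\bigl(\Omega^2(t,\xi),U\bigr)$ smooth locally in $\xi$, and I would simply cite that fact.

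The first key step is to differentiate $\Omega^2(t,\xi)=\langle L_\xi U,U\rangle$ in $\xi$:
\[
(\Omega^2)_\xi=\langle(\partial_\xi L_\xi)U,U\rangle+\langle L_\xi U_\xi,U\rangle+\langle L_\xi U,U_\xi\rangle,
\]
and to observe that self-adjointness of $L_\xi$ together with the eigenrelation turns the last two terms into $\Omega^2\bigl(\langle U_\xi,U\rangle+\langle U,U_\xi\rangle\bigr)=\Omega^2\,\partial_\xi\!\int_Q|U|^2=0$ by (\ref{eigen0_norm}). Hence only the operator-derivative term survives, $(\Omega^2)_\xi=\langle(\partial_\xi L_\xi)U,U\rangle$. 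Since $\partial_\xi\bigl(\tfrac{d}{dy}+{\rm i}\xi\bigr)={\rm i}$, one computes
\[
\partial_\xi L_\xi=-{\rm i}\,a(y,t)\biggl(\frac{d}{dy}+{\rm i}\xi\biggr)-{\rm i}\biggl(\frac{d}{dy}+{\rm i}\xi\biggr)a(y,t),
\]
so that, writing $I:=\int_Q a(y,t)\bigl(\tfrac{d}{dy}+{\rm i}\xi\bigr)U\,\overline{U}$ for the quantity under $\Im$ in (\ref{lemma_coeff_form}),
\[
(\Omega^2)_\xi=-{\rm i}\,I-{\rm i}\int_Q\biggl(\frac{d}{dy}+{\rm i}\xi\biggr)\bigl(a(y,t)U\bigr)\overline{U}.
\]

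The second key step is a single integration by parts in the remaining integral: using $Q$-periodicity of $a(\cdot,t)$ and $U$ to drop the boundary term, the identity $\overline{\tfrac{d}{dy}U}=\overline{(\tfrac{d}{dy}+{\rm i}\xi)U}+{\rm i}\xi\,\overline{U}$, and the reality of $a$ and $\xi$, one finds that the $\pm{\rm i}\xi\int_Q a|U|^2$ contributions cancel and
\[
\int_Q\biggl(\frac{d}{dy}+{\rm i}\xi\biggr)\bigl(a(y,t)U\bigr)\overline{U}=-\int_Q a(y,t)\,U\,\overline{\biggl(\frac{d}{dy}+{\rm i}\xi\biggr)U}=-\overline{I}.
\]
Therefore $(\Omega^2)_\xi=-{\rm i}(I-\overline{I})=2\,\Im I$, which is exactly (\ref{lemma_coeff_form}) once one recalls that $\xi=\varkappa-\varphi^{(0)}_x$ and that, by the stated convention, the right-hand side is evaluated there. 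The only genuinely delicate point is the smooth $\xi$-dependence of the eigenpair needed to differentiate the Rayleigh quotient, which is guaranteed by the simplicity of $\Omega^2(t,\xi)$ already assumed in the text; everything else is the routine self-adjoint cancellation followed by one periodic integration by parts, and I would present the argument in that order.
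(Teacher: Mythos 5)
Your proof is correct and is essentially the paper's argument: both are the Hellmann--Feynman computation, the paper differentiating the eigenvalue equation (\ref{eigen0}) in $\varkappa$ and then pairing with $\overline{U^{(0)}}$ (so that a chain-rule factor $1-\varphi^{(0)}_{x\varkappa}$ appears and cancels), while you differentiate the quadratic form $\langle L_\xi U^{(0)},U^{(0)}\rangle=\Omega^2$ directly in $\xi$. The cancellation via self-adjointness, the eigenrelation and the normalisation (\ref{eigen0_norm}), and the final identification of the symmetrised operator integral with $2{\rm i}\,\Im\int_Q a\bigl(\tfrac{d}{dy}+{\rm i}\xi\bigr)U^{(0)}\overline{U^{(0)}}$ coincide with the paper's steps.
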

\begin{proof}
Differentiating with respect to $\varkappa$ the eigenvalue equation (\ref{eigen0}), we obtain
\begin{align*}
&-{\rm i}(1-{\phi}_{x\varkappa})\biggl\{a(y,t)\biggl(\frac{d}{dy}+{\rm i}\varkappa-{\rm i}{\phi}_x\biggr)
+\biggl(\frac{d}{dy}+{\rm i}\varkappa-{\rm i}{\phi}_x\biggr)
a(y,t)\biggr\}U^{(0)}\\[0.3em]
&-\biggl(\frac{d}{dy}+{\rm i}\varkappa-{\rm i}{\phi}_x\biggr)
a(y,t)\biggl(\frac{d}{dy}+{\rm i}\varkappa-{\rm i}{\phi}_x\biggr)\frac{d}{d\varkappa}U^{(0)}=
(1-{\phi}_{x\varkappa})\bigl(\Omega^2\bigr)_\xi\,U^{(0)}+\Omega^2\frac{d}{d\varkappa}U^{(0)}.
\end{align*}
Multiplying both sides of the last equation by $\overline{U^{(0)}},$ integrating by parts in the last term on 
the left-hand side and using once again the eigenvalue equation (\ref{eigen0}) yields
\[
-{\rm i}(1-{\phi}_{x\varkappa})\int_0^1\biggl\{a(y,t)\biggl(\frac{d}{dy}+{\rm i}\varkappa-{\rm i}{\phi}_x\biggr)
+\biggl(\frac{d}{dy}+{\rm i}\varkappa-{\rm i}{\phi}_x\biggr)
a(y,t)\biggr\}U^{(0)}\overline{U^{(0)}}=(1-{\phi}_{x\varkappa})\bigl(\Omega^2\bigr)_\xi.
\]
Finally, we obtain (\ref{lemma_coeff_form}) by noticing that
\begin{align*}
\int_0^1\biggl\{a(y,t)\biggl(\frac{d}{dy}+{\rm i}\varkappa-{\rm i}{\phi}_x\biggr)
&+\biggl(\frac{d}{dy}+{\rm i}\varkappa-{\rm i}{\phi}_x\biggr)
a(y,t)\biggr\}U^{(0)}\overline{U^{(0)}}
\\[0.3em]
&=2{\rm i}\Im\int_0^1a(y,t)\biggl(\frac{d}{dy}+{\rm i}\varkappa-{\rm i}{\phi}_x\biggr)U^{(0)}\overline{U^{(0)}}.
\end{align*}
\end{proof}

Combining (\ref{amplitude0}) and Lemma \ref{lemma_coeff} yields
\beq
\Bigl[\bigl(u^{(0)}\bigr)^2{\phi}_{t}\Bigr]_{t}+\frac{1}{2}\Bigl[\bigl(u^{(0)}\bigr)^2\bigl(\Omega^2\bigr)_\xi\Bigr]_x=0.
\eeq{amplitude}
Using the product rule we re-write (\ref{amplitude}) as 
\beq
({\phi}_{{t}{t}}+{\Omega_\xi}\Omega_x)\bigl(u^{(0)}\bigr)^2+\Bigl[\bigl(u^{(0)}\bigr)^2\Bigr]_{t}{\phi}_{t}+\Bigl[\pm\bigl(u^{(0)}\bigr)^2{\Omega_\xi}\Bigr]_x(\pm\Omega)=0,
\eeq{amplitude_transformed}
where the first term equals $\pm\Omega_t$ in view of (\ref{aux}), and hence
\[
\Bigl[\bigl(u^{(0)}\bigr)^2\Bigr]_{t}{\phi}_{t}+\Bigl[\pm\bigl(u^{(0)}\bigr)^2{\Omega_\xi}\Bigr]_x(\pm\Omega)=\mp\bigl(u^{(0)}\bigr)^2\Omega_t.
\] 
Finally, using (\ref{phi_0_eqn}) results in 
\beq
\Bigl[\bigl(u^{(0)}\bigr)^2\Bigr]_{t}+\Bigl[\pm\bigl(u^{(0)}\bigr)^2{\Omega_\xi}\Bigr]_x=-\bigl(u^{(0)}\bigr)^2(\log\Omega)_t,
\eeq{u0_final}
which is the transport equation for the modulating function $u^{(0)}=u^{(0)}(x, {t}, \varkappa).$ The equation (\ref{u0_final}) is analogous to the amplitude transport equation derived in the theory of linear dispersive waves, {\it cf. e.g.} equation (11.64) in \cite{Whitham}.
 
\subsection{Solution along characteristics}
 
We integrate (\ref{u0_final}) along characteristics parametrised by ${t},$ so that for all $\varkappa\in[-\pi, \pi)$
\begin{align}
&\frac{dx({t})}{d{t}}=\pm{\Omega_\xi},\label{trajectories}\\[0.2cm] 
&\frac{d}{d{t}}\Bigl[\bigl(u^{(0)}(x({t}), {t}, \varkappa)\bigr)^2\Bigr]=\bigl(u^{(0)}(x({t}), {t}, \varkappa)\bigr)^2\Psi(t, \varkappa),
\nonumber
\end{align}
where 
\begin{equation}
\Psi(t,\varkappa):=\Bigl(\mp\Omega_{\xi\xi}(t, \xi){\phi}_{xx}\bigl(x({t}), {t},\varkappa)\bigr)
-\bigl(\log\Omega(t, \xi)\bigr)_t\Bigr)\Bigr\vert_{\xi=\varkappa-{\phi}_{x}(x({t}), {t},\varkappa))}.
\label{Psi}
\end{equation}
It follows that 
\begin{equation}
\bigl(u^{(0)}(x({t}), {t}, \varkappa))\bigr)^2=\bigl(u^{(0)}(x(0),0,\varkappa))\bigr)^2\exp\biggl(\int_0^{t}\Psi(s,\varkappa)ds\biggr).
\label{amplitude_formula}
\end{equation}
Using the hyperbolic equation (\ref{hyperbolic}) for ${\phi},$ we rewrite (\ref{amplitude_formula}) as follows:
\begin{equation}
\begin{aligned}
\bigl(u^{(0)}(x({t}), {t}, \varkappa))\bigr)^2
=\bigl(u^{(0)}(x(0),0,\varkappa))\bigr)^2\exp\biggl(\int_0^{t}\Bigl(&\mp\Omega_{\xi\xi}(\Omega_\xi)^{-2}{\phi}_{ss}\bigl(x(s), s,\varkappa\bigr)\\[0.4em]
&+\Omega_{\xi\xi}(\Omega_\xi)^{-2}\Omega_s-\bigl(\log\Omega(s, \xi)\bigr)_s\Bigr)ds\biggr).
\end{aligned}
\label{ampl_preform}
\end{equation}
where the expression under the integral is evaluated at $\xi=\varkappa-{\phi}_{x}(x({s}), {s},\varkappa),$ {\it cf.} (\ref{Psi}).

For brevity, below 
 we often omit the arguments $x({t}),$ ${t},$ $\varkappa$ of the function ${\phi}$ and its derivatives, as well as the arguments $t,$ $\xi=\varkappa-{\phi}_{x}(x({t}), {t},\varkappa)$ of the functions $\Omega,$ $\Omega_\xi.$ 

\begin{lemma}
\label{lemma_constant}
Suppose that $a=a(t,y).$ Along the characteristics\footnote{These are one-dimensional ``paths" parametrised by ${t}.$} (\ref{trajectories}):

1) The function ${\phi}_x(x({t}), {t}, \varkappa)$ is constant;

2) The following identity holds:
\begin{equation*}
\frac{d}{d{t}}{\phi}\bigl(x({t}), {t}, \varkappa\bigr)=\pm\phi_x\Omega_{\xi}\pm\Omega.
\end{equation*}
\end{lemma}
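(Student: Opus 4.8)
The plan is to establish both claims by differentiating along the characteristic curves $t\mapsto\bigl(x(t),t,\varkappa\bigr)$ defined by (\ref{trajectories}), using nothing beyond the chain rule together with the eikonal equation (\ref{phi_0_eqn}) and the two auxiliary identities (\ref{eq2}) and (\ref{second_fact}) already derived above.

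For part 1), I would differentiate $\varphi^{(0)}_x\bigl(x(t),t,\varkappa\bigr)$ with respect to $t$, which by the chain rule gives
\[
\frac{d}{dt}\varphi^{(0)}_x\bigl(x(t),t,\varkappa\bigr)=\varphi^{(0)}_{xx}\,\dot x(t)+\varphi^{(0)}_{xt}.
\]
Inserting $\dot x(t)=\pm\Omega_\xi$ from (\ref{trajectories}) and $\varphi^{(0)}_{xt}=\pm\Omega_x$ from (\ref{eq2}) turns the right-hand side into $\pm\bigl(\Omega_\xi\varphi^{(0)}_{xx}+\Omega_x\bigr)$, which vanishes identically by (\ref{second_fact}). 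Hence $\varphi^{(0)}_x$ — and therefore also the argument $\xi=\varkappa-\varphi^{(0)}_x$ at which $\Omega$ and its derivatives are evaluated — is constant along each characteristic.

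For part 2), I would differentiate $\varphi^{(0)}\bigl(x(t),t,\varkappa\bigr)$ itself. The chain rule gives
\[
\frac{d}{dt}\varphi^{(0)}\bigl(x(t),t,\varkappa\bigr)=\varphi^{(0)}_x\,\dot x(t)+\varphi^{(0)}_t=\pm\varphi^{(0)}_x\Omega_\xi+\varphi^{(0)}_t,
\]
having used $\dot x(t)=\pm\Omega_\xi$. Substituting $\varphi^{(0)}_t=\pm\Omega$ from the eikonal equation (\ref{phi_0_eqn}) produces exactly the asserted identity.

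No step here presents a genuine obstacle; the only things to watch are the sign bookkeeping — the upper/lower choices in (\ref{trajectories}), (\ref{eq2}) and (\ref{phi_0_eqn}) are correlated, as fixed in the discussion following (\ref{phi_0_eqn}) — and the role of the hypothesis $a=a(t,y)$, which is precisely what licenses (\ref{second_fact}): it guarantees that $\Omega$ depends on the slow variable $x$ only through $\varphi^{(0)}_x$, so that no term accounting for an explicit $x$-dependence of $\Omega$ enters the computation in part 1).
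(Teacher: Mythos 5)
Your argument is correct and is essentially the paper's own proof: both parts follow by applying the chain rule along the characteristic and then invoking (\ref{trajectories}), (\ref{eq2}), (\ref{second_fact}) for part 1) and (\ref{trajectories}), (\ref{phi_0_eqn}) for part 2). Your version is in fact cleaner on one point — the paper's displayed computation for part 1) carries spurious minus signs in the chain-rule expansion, whereas your signs are the correct ones (and the conclusion is the same either way) — and your closing remark correctly identifies the hypothesis $a=a(t,y)$ as the reason $\Omega$ has no explicit $x$-dependence.
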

\begin{proof}
The equations (\ref{eq2}), (\ref{second_fact}) imply
\begin{equation*}
\frac{d}{d{t}}{\phi}_x(x({t}), {t}, \varkappa)=-{\phi}_{xx}x'({t})-{\phi}_{x{t}}=-{\phi}_{xx}(\pm\Omega_\xi)\pm\Omega_\xi{\phi}_{xx}=0,
\end{equation*}
hence the first claim. Furthermore, using the chain rule and identities, we obtain 
\begin{equation*}
\frac{d}{d{t}}{\phi}\bigl(x({t}), {t}, \varkappa\bigr)={\phi}_{x}x'({t})+{\phi}_{{t}},
\end{equation*}
from which the second claim follows using (\ref{phi_0_eqn}), (\ref{trajectories}).
\end{proof}

For each $\varkappa\in [-\pi, \pi),$ denote $g(\sigma):={\phi}(0,\sigma, \varkappa),$ $\sigma\in{\mathbb R},$ the initial values of the leading order phase function ${\phi}.$
The first part of Lemma \ref{lemma_constant} implies that for each $\sigma$ the value $g'(\sigma)$ is ``propagated'' along the characteristics (\ref{trajectories}) as the (constant) value of the derivative ${\phi}_x.$
From the second part of Lemma \ref{lemma_constant}, we infer then that 
\begin{equation}
{\phi}(t,x,\varkappa)=g(\sigma)\pm\int_0^t\Bigl(g'(\sigma)\Omega_\xi\bigl(s,\varkappa-g'(\sigma)\bigr)+\Omega\bigl(s,\varkappa-g'(\sigma)\bigr)\Bigr)ds,
\label{varphi_form}
\end{equation}
where $\sigma$ is related to $x,$ $t$ via 
\begin{equation}
x=\sigma\pm\int_0^t\Omega_\xi\bigl(\tau, \varkappa-g'(\sigma)\bigr)d\tau.
\label{sigmaxt}
\end{equation}

It follows from the above equations that 
\[
{\phi}_t(t, x,\varkappa)=\pm\Omega\bigl(t,\varkappa-g'(\sigma)\bigr),
\]
and hence 
\[
{\phi}_{tt}(t, x,\varkappa)=\pm\Omega_t\bigl(t,\varkappa-g'(\sigma)\bigr)\mp\Omega_\xi\bigl(t,\varkappa-g'(\sigma)\bigr)g''(\sigma),
\]
where $\sigma=\sigma(x,t)$ is given by (\ref{sigmaxt}). Substituting this into (\ref{ampl_preform}) yields
\begin{equation}
\bigl(u^{(0)}(x, {t}, \varkappa)\bigr)^2
=\bigl(u^{(0)}(\sigma,0,\varkappa)\bigr)^2\exp\biggl(-\int_0^{t}\Bigl((\Omega_{\xi\xi}/\Omega_\xi)^2g''(\sigma)+(\log\Omega)_s\Bigr)ds\biggr),
\label{u0_formula}
\end{equation}
where $\Omega=\Omega(s,\xi)=\Omega(s, \varkappa-g'(\sigma)).$

\subsection{The leading-order term of the asymptotics}

\label{asymptotic_form}

Suppose that $g(\sigma)=0,$ $\sigma\in{\mathbb R},$ {\it i.e.} the initial phase vanishes. This choice corresponds to the ``specially prepared'' initial data, whose Gelfand transform has the form 
\begin{equation}
\widetilde{u}(x, \varkappa)U^{(0)}\biggl(\frac{x}{\varepsilon}, \Omega(0, \varkappa)\biggr),\qquad \varkappa\in[-\pi, \pi).
\label{init_spec}
\end{equation}
Then (\ref{varphi_form}), 
(\ref{u0_formula}) read
\begin{align*}
{\phi}(t,x,\varkappa)&=\pm\int_0^t\Omega(s,\varkappa)ds,\\[0.2em]
u^{(0)}(x, {t}, \varkappa)
&=\widetilde{u}\biggl(x\mp\int_0^t\Omega_\varkappa(s,\varkappa)ds, \varkappa\biggr)\sqrt{\frac{\Omega(0,\varkappa)}{\Omega(t,\varkappa)}},
\end{align*}
for some distribution $\widetilde{u}:{\mathbb R}\times [-\pi, \pi)\to{\mathbb R},$ which we assume to be smooth in the first variable.\footnote{The analysis of the case of non-smooth, {\it e.g.} piecewise smooth, $\widetilde{u}$ is outside the scope of this paper.} Furthermore, from (\ref{real_part}) and the form of the initial data (\ref{init_spec}) we have $\eta=0.$

In particular, when $a=a(y),$ and so $\Omega(t,\varkappa)$ is independent of $t,$ we obtain 
\begin{equation}
u^{(0)}(x, {t}, \varkappa)
=\widetilde{u}\bigl(x\mp\Omega'(\varkappa)t,\varkappa\bigr).
\label{t_independent}
\end{equation}

Summarising, the leading-order term in (\ref{uform}), (\ref{expansion}) is given by
\begin{equation}
\begin{aligned}
\int_{-\pi}^{\pi}\widetilde{u}\biggl(x\mp\int_0^t\Omega_\varkappa(s,\varkappa)ds,\varkappa\biggr)&\sqrt{\frac{\Omega(0,\varkappa)}{\Omega(t,\varkappa)}}
U^{(0)}\biggl(\frac{x}{\varepsilon},\Omega(t,\varkappa)\biggr)\\[0.3em]
&\times\exp\biggl[\frac{\rm i}{\varepsilon}\Bigl(\varkappa x\mp\int_0^t\Omega(s, \varkappa)ds\Bigr)\biggr]d\varkappa,
\end{aligned}
\label{elem_asymp}
\end{equation}



The main contribution to the integral (\ref{elem_asymp}) is provided by the neighbourhoods of the points 
$\varkappa=\widehat{\varkappa}$ for which the phase function 
is ``stationary'', {\it i.e.}
\begin{equation}
\pm\int_0^t\Omega_\varkappa(s, \varkappa)ds=x,
\label{stat_eqn1}
\end{equation}
if such points exist, and (\ref{elem_asymp}) is asymptotically smaller than any power of $\varepsilon$ if there are not any. 
Assuming that for all solutions $\widehat{\varkappa}$ the non-degeneracy condition 
\[
\int_0^t\Omega_{\varkappa\varkappa}(s,\widehat{\varkappa})ds\neq0
\] 
is satisfied and using the standard formulae (see {\it e.g.} \cite[Section 2.9]{Erdelyi}) of the method of stationary phase, 
we infer the following asymptotics as $\varepsilon\to0:$
\begin{equation}
\begin{aligned}
u^\varepsilon(x,t)&\sim\sum_{\widehat{\varkappa}(x,t)}
\widetilde{u}\biggl(x\mp\int_0^t\Omega_\varkappa(s,\widehat{\varkappa})ds,\widehat{\varkappa}\biggr)\sqrt{\frac{\Omega(0,\widehat{\varkappa})}{\Omega(t,\widehat{\varkappa})}}
U^{(0)}\biggl(\frac{x}{\varepsilon},\Omega(t,\widehat{\varkappa})\biggr)
\biggl\vert\int_0^t\Omega_{\varkappa\varkappa}(s,\widehat{\varkappa})ds\biggr\vert^{-1/2}
\\[0.4em]
&
\ \ \ \ \ \ \ \ \ \ \ \ \ \ \ \ \ \ \ \ \ \times\exp\biggl[\frac{\rm i}{\varepsilon}\biggl(\widehat{\varkappa}x\mp\int_0^t\Omega(s,\widehat{\varkappa})ds\biggr)
-\frac{\pi}{4}{\rm sgn}\biggl(\int_0^t\Omega_{\varkappa\varkappa}(s,\widehat{\varkappa})ds
\biggr)\biggr],\\[0.4em]
&=\sum_{\widehat{\varkappa}(x,t)}\widetilde{u}(0,\widehat{\varkappa})\sqrt{\frac{\Omega(0,\widehat{\varkappa})}{\Omega(t,\widehat{\varkappa})}}
U^{(0)}\biggl(\frac{x}{\varepsilon},\Omega(t,\widehat{\varkappa})\biggr)
\biggl\vert\int_0^t\Omega_{\varkappa\varkappa}(s,\widehat{\varkappa})ds\biggr\vert^{-1/2}
\\[0.4em]
&\ \ \ \ \ \ \ \ \ \ \ \ \ \ \ \ \ \ \ \ \ \times\exp\biggl[\frac{\rm i}{\varepsilon}\biggl(\widehat{\varkappa}x\mp\int_0^t\Omega(s,\widehat{\varkappa})ds\biggr)
-\frac{\pi}{4}{\rm sgn}\biggl(\int_0^t\Omega_{\varkappa\varkappa}(s,\widehat{\varkappa})ds
\biggr)\biggr] 
\end{aligned}
\label{wave_formula}
\end{equation}
where ``${\rm sgn}$'' stands for the sign function, and the sum is set to zero if for $x,t$ there are no $\varkappa$ satisfying (\ref{stat_eqn1}).

Suppose, in particular, that 
\begin{equation}
\widetilde{u}(x, \varkappa)=f(x)\delta({\varkappa}-\varkappa_*),
\label{point}
\end{equation}
where $\delta$ is the usual Dirac delta-function, {\it i.e.} the initial data oscillate on the scale $\varepsilon$ as a  wave with quasimomentum $\varkappa_*,$ enveloped by the amplitude function $f=f(x).$ Then the formula (\ref{wave_formula}) yields an asymptotically small (of order $O(\varepsilon)$) value for 
$u^\varepsilon(x,t)$ at all points $(x,t)$ in space-time except ({\it cf.} (\ref{stat_eqn1}))
\[
x=\pm\int_0^t\Omega_\varkappa(s, \varkappa_*)ds,
\]
{\it i.e.} those for which $\widehat{\varkappa}(x,t)=\varkappa_*.$ In the case when $a(t,y)=a(y),$ {\it cf.} (\ref{t_independent}), and assuming that $\Omega'$ is monotonic, this results in the following formula for $u^\varepsilon:$
\begin{equation}
u^\varepsilon(x,t)\sim \frac{f(0)}{\sqrt{t\vert\Omega''(\varkappa_*)\vert}}\delta\bigl(x\mp\Omega'(\varkappa_*)t\bigr)
U^{(0)}\biggl(\frac{x}{\varepsilon},\Omega(\varkappa_*)\biggr)
\exp\biggl[\frac{\rm i}{\varepsilon}\bigl(\varkappa_*x\mp \Omega(\varkappa_*)t\bigr)
-\frac{\pi}{4}{\rm sgn}\bigl(\Omega''(\varkappa_*)\bigr)\biggr], 
\label{point_solution}
\end{equation}
which is a pulse supported at $x=\pm\Omega'(\varkappa_*)t$ ({\it i.e.} moving with velocity $\Omega'(\varkappa_*)$), with amplitude exhibiting two kinds of behaviour in time: decay $1/\sqrt{t\vert\Omega''(\varkappa_*)\vert}$ and oscillation 
$U^{(0)}\bigl(\Omega'(\varkappa_*)t/\varepsilon,\Omega(\varkappa_*)\bigr).$

In the next two sections we show that the property illustrated in (\ref{point_solution}) is general, {\it i.e.} in a wave train the energy locally propagates with the velocity $\Omega'(\varkappa)$ (Section \ref{prop_k}) and that the amplitude modulation of the solution (\ref{point_solution}) leads to a new and potentially useful effect in the case of a high-contrast periodic medium (Section 
\ref{hc_section}).
 
\section{Propagation of local quasimomenta $\varkappa$ and wavenumbers $k$}
\label{prop_k}

Note that the equation (\ref{phi_0_eqn}) can be written as 
\beq
\omega=\mp\Omega(t, k).
\eeq{dispersion_rel}
where we 
denote by 
\begin{align}
\omega&=\omega(x,{t},\varkappa):=-\theta_{t}(x, {t}, \varkappa)=-{\phi}_{t}(x, {t}, \varkappa),
\label{local_defin1}\\[0.3em]
k&=k(x,{t},\varkappa):=-\theta_x(x, {t}, \varkappa)=\varkappa-{\phi}_x(x, {t}, \varkappa)
\label{local_defin2}
\end{align}
the local values of ``frequency'' and ``wavenumber'' in a nonuniform wave train, in particular, in a ``wave packet'' such as (\ref{wave_formula}), {\it cf.} \cite{Whitham}.
We assume that for all $x,{t},\varkappa$ the ``ampltude function'' $u^{(0)}$ does not include any phase of the function 
$\widehat{u}$ by requiring that 
\[
u^{(0)}(x, {t}, \varkappa)=\bigl\vert u^{(0)}(x, {t}, \varkappa)\bigr\vert.
\]
This requirement is met by including the expression for the corresponding phase into the function ${\phi}.$ 

Differentiating the equation (\ref{dispersion_rel}) with respect to $\varkappa$ yields
\beq
\omega_\varkappa=\mp k_\varkappa\Omega_\xi.
\eeq{omega_k}
For the function $\varkappa=\widehat{\varkappa}(x, {t})$ describing the stationary value of $\varkappa$ in (\ref{stat_eqn}), one has, by differentiating  the equation
\begin{equation}
{\phi}_\varkappa(t,x,\widehat{\varkappa})=x
\label{stat_eqn}
\end{equation}
with respect to $x,$
\[
{\phi}_{\varkappa x}+{\phi}_{\varkappa\varkappa}\widehat{\varkappa}_x=1,
\]
or, equivalently, by additionally using (\ref{local_defin2}),
\beq
{\phi}_{\varkappa\varkappa}\widehat{\varkappa}_x=k_\varkappa.
\eeq{comb1}
Further, differentiating (\ref{stat_eqn}) with respect to ${t}$ for fixed $x,$ we write
\[
{\phi}_{\varkappa{t}}+{\phi}_{\varkappa\varkappa}\widehat{\varkappa}_{t}=0,
\]
from which, using the definition (\ref{local_defin1}), we obtain
\beq
\omega_\varkappa-{\phi}_{\varkappa\varkappa}\widehat{\varkappa}_{t}=0.
\eeq{comb2}
Finally, combining (\ref{comb2}), (\ref{omega_k}) and (\ref{comb1}) yields
\[
{\phi}_{\varkappa\varkappa}\widehat{\varkappa}_{t}=\omega_\varkappa=\mp k_\varkappa\Omega_\xi=\mp{\phi}_{\varkappa\varkappa}\widehat{\varkappa}_x\Omega_\xi,
\]
and hence 
\beq
\widehat{\varkappa}_{t}\pm\widehat{\varkappa}_x\Omega_\xi=0,
\eeq{cons_law_varkappa}
assuming that ${\phi}_{\varkappa\varkappa}$ does not vanish in the domain of ${\phi}.$
A version of the transport equation
\beq
k_{t}\pm k_x\Omega_\xi=0,
\eeq{cons_law_k}
for the local wave number described in \cite{Whitham} also holds for the quantity $\widehat{k}$ given by 
({\it cf.} (\ref{local_defin2})) 
\[
\widehat{k}(x,{t}):=k\bigl(x, {t}, \widehat{\varkappa}(x,{t})\bigr)=\widehat{\varkappa}(x,{t})
-{\phi}_x\bigl(x,{t},\widehat{\varkappa}(x,{t})\bigr),
\] 
namely
\[
\widehat{k}_{t}\pm\widehat{k}_x\Omega_\xi=0.
\]
This is obtained immediately by differentiating the equation ({\it cf.} (\ref{dispersion_rel}))
\[
\widehat{\omega}=\mp\Omega(t,\widehat{k}),\ \ \ \ \widehat{\omega}:=\omega\bigl(x, {t}, \widehat{\varkappa}(x,{t})\bigr)
\]
with respect to $x,$ and by noting first that
\beq
\omega_x=-\theta_{{t} x}=k_{t}={\phi}_{x{t}}=\pm{\phi}_{xx}\Omega_\xi
\eeq{later_use}
in view of (\ref{local_defin1}), (\ref{local_defin2}), (\ref{phi_0_eqn}),
and second that 
\[
\omega_{\varkappa}=-{\phi}_{\varkappa{t}}=\mp(1-{\phi}_{x\varkappa})\Omega_\xi
\]
in view of (\ref{phi_0_eqn}).

The equations (\ref{cons_law_varkappa}) and (\ref{cons_law_k}) are interpreted in the sense that the local quasimomentum $\widehat{\varkappa}$ and the local wavenumber $\widehat{k}$ propagate at each point 
$(x,{t})$ with the ``group velocity'' $\Omega_\xi(t, \widehat{k}).$  As we shall see in the next section,
the quantity $\Omega_\xi(t, \widehat{k})$ describes the speed of propagation of the 
wave energy in the wave-train: the amount of the energy carried between two points moving with group velocities remains unchanged with time.

\section{Transport of wave amplitude}
\label{prop_amp}

First we note that the points $x$ that have a fixed value of $\varkappa$ are transported with the group velocity $\Omega_\xi(t, \widehat{k}).$ Indeed, differentiating (\ref{stat_eqn}) with respect to ${t}$ for fixed 
$\varkappa$ we write
\[
{\phi}_{\varkappa x}x_{t}+{\phi}_{\varkappa{t}}=x_{t}.
\]
At the same time, differentiating (\ref{phi_0_eqn}) with respect to $\varkappa$ we obtain
\[
{\phi}_{{t}\varkappa}=\pm(1-{\phi}_{x\varkappa})\Omega_\xi.
\]
Combining the above two equalities yields
\[
(1-{\phi}_{x\varkappa})x_{t}=\pm(1-{\phi}_{x\varkappa})\Omega_\xi,
\]
where $\Omega_\xi(t,\xi)$ is evaluated at $\xi=\varkappa-{\phi}_x\bigl(x({t}, \varkappa), {t}, \varkappa\bigr).$ Hence (assuming that ${\phi}_{x\varkappa}\neq1$) 
\beq
x_{t}({t},\varkappa)=\pm\Omega_\xi,
\eeq{x_speed}
as claimed.

Now, consider the integral of the modulus of the solution $u^\varepsilon$ squared, between any two points $x_1=x_1({t}),$ $x_2=x_2({t})$ moving with 
group velocities corresponding to the values $\varkappa_1$ 
$\varkappa_2$ of the quasimomentum (and hence have the local values of the quasimomentum $\widehat{\varkappa}(x_1,{t})=\varkappa_1,$  
$\widehat{\varkappa}(x_2,{t})=\varkappa_2$ constant in time). Using the asymptotic formula (\ref{wave_formula}), we write, as 
$\varepsilon\to0,$ 
\begin{equation}
\begin{aligned}
Q({t}):=\int_{x_1({t})}^{x_2({t})}&\bigl\vert u^\varepsilon(x)\bigr\vert^2dx
\sim\
\int_{x_1({t})}^{x_2({t})}\Bigl\{u^{(0)}\bigl(x, {t}, \widehat{\varkappa}(x, t)\bigr)\Bigr\}^2\\[0.4em]
&\times\biggl(\int_0^1\Bigl\{U^{(0)}\bigl(y, \widehat{\varkappa}-{\phi}_x\bigl(x,{t},\widehat{\varkappa}(x,t)\bigr)\bigr)\Bigr\}^2dy\biggr)
\Bigl\vert{\phi}_{\varkappa\varkappa}\bigl(x,{t},\widehat{\varkappa}(x,t)\bigr)\Bigr\vert^{-1}dx,
\end{aligned}
\label{energy_exp}
\end{equation}
where we use the result of \cite[Appendix C]{Sm_Cher_2000} to separate averages with respect to the fast and slow variables.
Making the change of the variable from $x$ to $\varkappa$ according to the equation (\ref{stat_eqn}) results in 
\beq
Q({t})\sim
\int_{\varkappa_1}^{\varkappa_2}\Bigl(u^{(0)}\bigl(x({t},\varkappa), {t}, \varkappa\bigr)\Bigr)^2F\bigl(\tilde{k}({t}, \varkappa)\bigr)\Bigl(1-{\phi}_{\varkappa x}\bigl(x({t}, \varkappa), {t},\varkappa\bigr)\Bigr)^{-1}d\varkappa,
\eeq{Q_expression}
where
\[
F(\tilde{k}):=\int_0^1\bigl(U^{(0)}(y, \tilde{k})\bigr)^2dy,\ \ \ \ \ \ \ \ \tilde{k}({t},\varkappa):=\varkappa-{\phi}_x\bigl(x({t},\varkappa),{t},\varkappa\bigr).
\]

\begin{theorem}
The energy (\ref{Q_expression}) is constant in time ${t}.$ 
\end{theorem}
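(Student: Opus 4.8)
The plan is to differentiate (\ref{Q_expression}) in $t$ under the integral sign — legitimate since $\varkappa_1,\varkappa_2$ are fixed constants labelling the two rays — and to show that the integrand is constant along each ray. Fix $\varkappa$, put $x(t):=x(t,\varkappa)$, and write $E(t):=\bigl(u^{(0)}(x(t),t,\varkappa)\bigr)^{2}F\bigl(\widetilde k(t,\varkappa)\bigr)\bigl(1-\varphi^{(0)}_{\varkappa x}(x(t),t,\varkappa)\bigr)^{-1}$. By (\ref{x_speed}) one has $x'(t)=\pm\Omega_\xi$, so differentiating a quantity evaluated at $(x(t),t,\varkappa)$ amounts to applying the material derivative $\tfrac{D}{Dt}:=\partial_t\pm\Omega_\xi\partial_x$, and it suffices to show $\tfrac{D}{Dt}\log E=0$; I would do this by treating the three logarithmic pieces separately.

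\medskip

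\noindent\emph{The eigenfunction factor.} Part (1) of Lemma \ref{lemma_constant} gives that $\varphi^{(0)}_x(x(t),t,\varkappa)$, and hence $\widetilde k(t,\varkappa)=\varkappa-\varphi^{(0)}_x(x(t),t,\varkappa)$, is constant along the ray; since for $x$- and $t$-independent $a$ the normalised eigenfunction $U^{(0)}(\cdot,\widetilde k)$ of (\ref{eigen0})--(\ref{eigen0_norm}) depends on $\widetilde k$ alone, the factor $F(\widetilde k)=\int_Q(U^{(0)}(y,\widetilde k))^{2}dy$ is constant along the ray, i.e.\ $\tfrac{D}{Dt}\log F=0$.

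\medskip

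\noindent\emph{The amplitude and the ray-tube Jacobian.} Expanding the transport equation (\ref{u0_final}) and using $(\Omega_\xi)_x=-\Omega_{\xi\xi}\varphi^{(0)}_{xx}$ (valid because $\Omega$ is evaluated at $\xi=\varkappa-\varphi^{(0)}_x$) gives
\[
\frac{D}{Dt}\log\bigl(u^{(0)}\bigr)^{2}=\pm\,\Omega_{\xi\xi}\varphi^{(0)}_{xx}-(\log\Omega)_t .
\]
On the other hand, differentiating the eikonal equation (\ref{phi_0_eqn}) with respect to $x$ recovers $\varphi^{(0)}_{tx}=\mp\Omega_\xi\varphi^{(0)}_{xx}$ (this is (\ref{eq2}) combined with (\ref{second_fact})); differentiating once more in $\varkappa$, combining with $x'(t)=\pm\Omega_\xi$, and cancelling the terms carrying $\varphi^{(0)}_{xx\varkappa}$, one finds
\[
\frac{D}{Dt}\bigl(1-\varphi^{(0)}_{\varkappa x}\bigr)=\pm\,\bigl(1-\varphi^{(0)}_{\varkappa x}\bigr)\Omega_{\xi\xi}\varphi^{(0)}_{xx},
\qquad\text{hence}\qquad
\frac{D}{Dt}\log\bigl(1-\varphi^{(0)}_{\varkappa x}\bigr)=\pm\,\Omega_{\xi\xi}\varphi^{(0)}_{xx}.
\]

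\medskip

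\noindent Adding up, $\tfrac{D}{Dt}\log E=\bigl(\pm\Omega_{\xi\xi}\varphi^{(0)}_{xx}-(\log\Omega)_t\bigr)+0-\bigl(\pm\Omega_{\xi\xi}\varphi^{(0)}_{xx}\bigr)=-(\log\Omega)_t$, which vanishes when the coefficient is time-independent; then $E$ is constant along each ray, so (\ref{Q_expression}) is constant in $t$. The heart of the argument is the exact cancellation between $\tfrac{D}{Dt}\log(u^{(0)})^{2}$ and $\tfrac{D}{Dt}\log(1-\varphi^{(0)}_{\varkappa x})$ — the amplitude envelope changes at precisely the rate at which the ray-tube factor $(1-\varphi^{(0)}_{\varkappa x})^{-1}$ (the Jacobian relating the $x$- and $\varkappa$-descriptions) does — which is the Liouville/ray-tube structure of geometric optics. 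The point I expect to require the most care is the residual term $-(\log\Omega)_t$: for $a=a(y)$ (the setting of the example of Section \ref{hc_section}) it is absent, while for genuinely time-dependent $a$ one must verify that the $t$-dependence inherited by $\int_Q\{U^{(0)}\}^{2}$ from $a(\cdot,t)$ exactly compensates it — this is the delicate book-keeping of the proof.
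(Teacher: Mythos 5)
Your proof is correct and follows essentially the same route as the paper's: differentiation of the integrand of (\ref{Q_expression}) along the characteristic $x(t,\varkappa)$, the same three-factor split, constancy of $\tilde{k}$ along the ray to kill the $F$-term, the identity $\varphi^{(0)}_{\varkappa xx}x_{t}+\varphi^{(0)}_{\varkappa x t}=\pm\bigl(1-\varphi^{(0)}_{x\varkappa}\bigr)(\Omega_\xi)_x$ for the Jacobian factor, and the transport equation (\ref{u0_final}) for the amplitude. Your explicit bookkeeping of the residual $-(\log\Omega)_t$ is in fact slightly more careful than the paper's own final step, which invokes (\ref{u0_final}) as giving exact vanishing even though its right-hand side $-(u^{(0)})^2(\log\Omega)_t$ is nonzero when $a$ depends on $t$ (so the statement, as proved, holds for $a=a(y)$).
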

\begin{proof}
Indeed, for the derivative with respect to ${t}$ of the expression under the integral in (\ref{Q_expression}), one has
\begin{align*}
&{\mathcal D}:=\frac{\partial}{\partial{t}}\biggl\{\Bigl(u^{(0)}\bigl(x({t},\varkappa), {t}, \varkappa\bigr)\Bigr)^2F\bigl(\tilde{k}({t}, \varkappa)\bigr)\Bigl(1-{\phi}_{\varkappa x}\bigl(x({t}, \varkappa), {t},\varkappa\bigr)\Bigr)^{-1}\biggr\}\\[0.3em]
&
\qquad\qquad
=\biggl\{\Bigl[\bigl(u^{(0)}\bigr)^2\Bigr]_xx_{t}+\Bigl[\bigl(u^{(0)}\bigr)^2\Bigr]_{t}\biggr\}F(\tilde{k})
\Bigl(1-{\phi}_{\varkappa x}\bigl(x({t}, \varkappa), {t}, \varkappa\bigr)\Bigr)^{-1}\\[0.3em]
&
\qquad\qquad\qquad\qquad
+\bigl(u^{(0)}\bigr)^2F(\tilde{k})\Bigl(1-{\phi}_{\varkappa x}\bigl(x({t}, \varkappa), {t},\varkappa\bigr)\Bigr)^{-2}
\bigl({\phi}_{\varkappa xx}x_{t}+{\phi}_{\varkappa x{t}}\bigr)
\\[0.3em]
&\qquad\qquad\qquad\qquad\qquad\qquad
+\bigl(u^{(0)}\bigr)^2\Bigl(1-{\phi}_{\varkappa x}\bigl(x({t}, \varkappa), {t},\varkappa\bigr)\Bigr)^{-1}
F'(\tilde{k})\bigl(\tilde{k}_xx_{t}+\tilde{k}_{t}\bigr).
\end{align*}
The last term in the above expression vanishes, due to the fact that
\beq
\tilde{k}_xx_{t}+\tilde{k}_{t}=-{\phi}_{xx}x_{t}-{\phi}_{x{t}}=0,
\eeq{ktilde_eq}
which holds by virtue of (\ref{x_speed}) and (\ref{phi_0_eqn}), {\it cf.} (\ref{later_use}).
Further, notice that\footnote{As before, expressions $\Omega_\xi$, $\Omega_{\xi\xi}$ are evaluated at  
$\varkappa-{\phi}_x(x,{t},\varkappa).$}
\[
{\phi}_{\varkappa xx}x_{t}+{\phi}_{\varkappa x{t}}=\mp{\phi}_{xx}(\Omega_\xi)_\varkappa=\mp(1-{\phi}_{x\varkappa}){\phi}_{xx}\Omega_{\xi\xi}=\pm(1-{\phi}_{x\varkappa})(\Omega_\xi)_x,
\]
in view of 
\[
{\phi}_{{t} x\varkappa}=\pm(-{\phi}_{xx})(\Omega_\xi)_\varkappa\pm(-{\phi}_{xx\varkappa})\Omega_\xi,
\]
which, in turn,  is obtained by differentiating the last equality in (\ref{later_use}) with respect to $\varkappa,$ 
{\it cf.} (\ref{ktilde_eq}).

Combining the above observations and the equation (\ref{x_speed}) yields
\[
{\mathcal D}=\biggl\{\Bigl[\pm\bigl(u^{(0)}\bigr)^2\Omega_\xi\Bigr]_x+\Bigl[\bigl(u^{(0)}\bigr)^2\Bigr]_{t}\biggr\}
\Bigl(1-{\phi}_{\varkappa x}\bigl(x({t}, \varkappa), {t}, \varkappa\bigr)\Bigr)^{-1},
\]
which vanishes thanks to the transport equation (\ref{u0_final}) for the function $u^{(0)}.$ This concludes the proof.
\end{proof}


The above argument implies, in particular, that 
\begin{equation}
{\mathcal E}_{t}+{\mathcal F}_x=0,
\label{energy_flux}
\end{equation}
where ${\mathcal E}$ is the energy density, given for each $(x, {t})$ by the expression under the integral 
in (\ref{energy_exp}), and 
\[
{\mathcal F}=\pm\Omega_\xi\bigl(\widehat{k}(x,{t})\bigr){\mathcal E},\qquad\widehat{k}(x, {t}):=\widehat{\varkappa}(x, {t})-
{\phi}\bigl(x,{t},\widehat{\varkappa}(x,{t})\bigr), 
\]
is the density of the ``energy flux''.The formula (\ref{energy_flux}) shows that the energy is carried by the wave packet  with the group velocity corresponding to the local value $\widehat{\varkappa}$ of the quasimomentum.

\section{Wave modulation in a high-contrast periodic medium}
\label{hc_section}

In this section we discuss a class of piecewise-constant coefficients $a$ in (\ref{1D_orig_eqn}) 
\begin{equation}
a(y,t)=\left\{\begin{array}{ll}a_1,\quad y\in(0, h),\\[0.25em]a_2,\quad y\in(h, 1),\end{array}\right.
\label{coef_a}
\end{equation}
where one of the the values, say $a_1,$ is assumed to be large. In the applied analysis literature this kind of model is sometimes referred to as the ``large-coupling" limit of the problems (\ref{1D_orig_eqn}), see \cite{HL, Cher_Kis_Silva}. In the context of homogenisation, when $a_1=\varepsilon^{-2}$ this also corresponds to the ``critical high contrast" limit, when the medium exhibits ``metamaterial'' behaviour, see \cite{Physics, CEK, SK} for details.   The spectral and resolvent analysis of the one-dimensional setup has been carried out in \cite{CherednichenkoCooperGuenneau, CherKis, GrandePreuve, CCC}.

The limit ($a_1\to\infty$) dispersion relation $\Omega=\Omega(\varkappa)$ for (\ref{coef_a}) is given implicitly by (see {\it e.g.} \cite{CherednichenkoCooperGuenneau})
\begin{equation}
\cos\biggl(\dfrac{\Omega}{\sqrt{a_2}}(1-h)\biggr)-\frac{1}{2}\sin\biggl(\dfrac{\Omega}{\sqrt{a_2}}(1-h)\biggr)\dfrac{\Omega h}{\sqrt{a_2}}=\cos\varkappa,\qquad \varkappa\in[-\pi, \pi).
\label{hc_dispersion}
\end{equation}
and the corresponding eigenfunction limit is shown to be given by
\begin{equation}
C_{\varkappa,h}U^{(0)}(y, \Omega)=\left\{\begin{array}{ll}\sin\biggl(\dfrac{\Omega}{\sqrt{a_2}}(1-h)\biggr)\exp(-{\rm i}\varkappa y),\qquad y\in(0,h],\\[1.0em]
\sin\biggl(\dfrac{\Omega}{\sqrt{a_2}}(1-y)\biggr)+\sin\biggl(\dfrac{\Omega}{\sqrt{a_2}}(y-h)\biggr)\exp\bigl({\rm i}\varkappa(1-y)\bigr),\qquad y\in(h, 1),\end{array}\right.
\label{eigenf}
\end{equation}
where $C_\varkappa$ is a normalising coefficient, which ensures that (\ref{eigen0_norm}) holds, see (\ref{ckappa}) below.
The set of values $\Omega$ satisfying (\ref{hc_dispersion}) for a given $\varkappa$ is a sequence of values 
$\{\Omega_n(\varkappa)\}_{n\in{\mathbb N}}$, which sweeps a countable union of disjoint intervals separated by gaps, see Figure \ref{limitspectrum}. 
\begin{center}
\begin{figure}[h]
\centering
\includegraphics[scale=0.4]{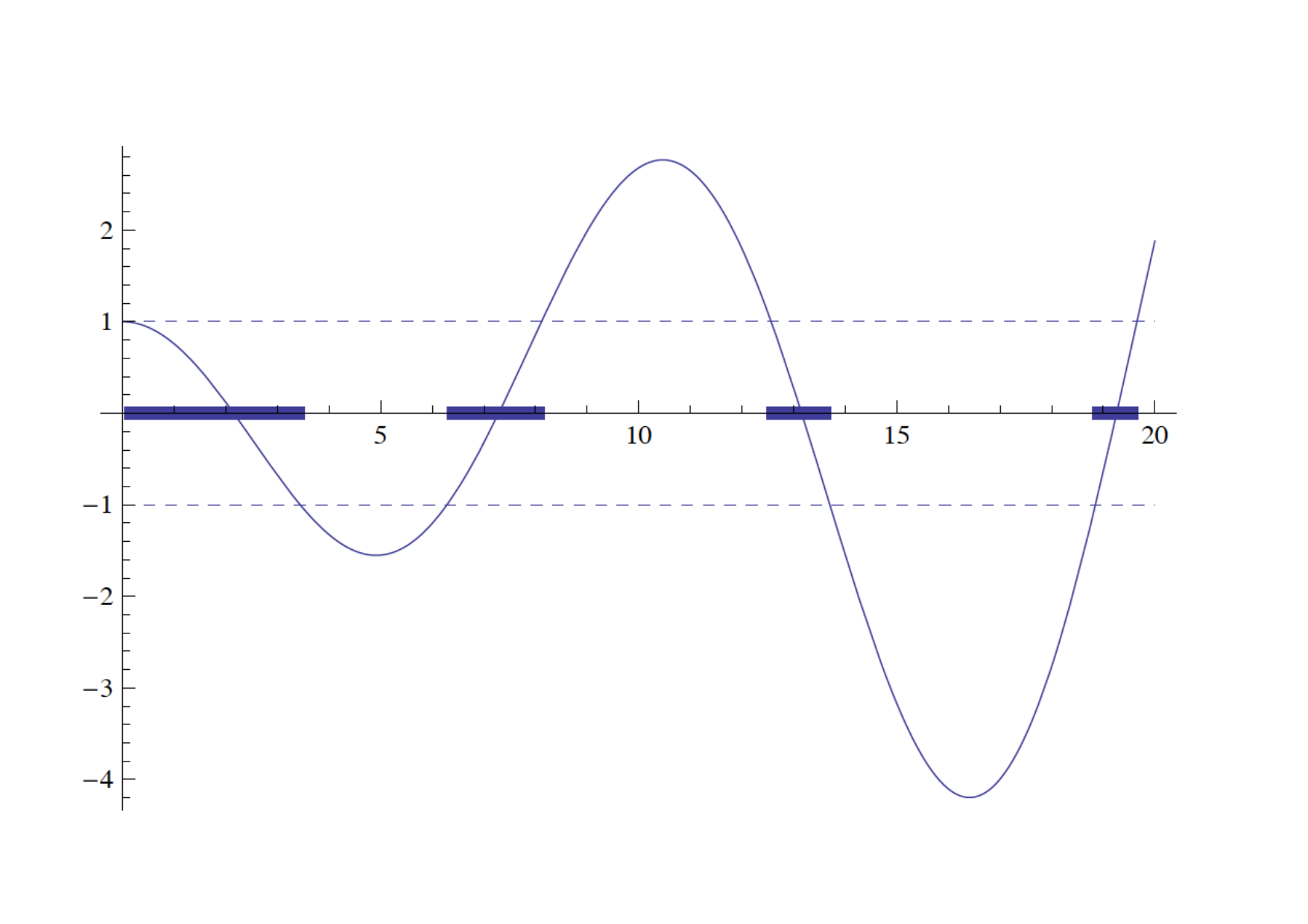}
\caption{The square root of the limit spectrum for a high-contrast periodic medium, {\it i.e.} the set of $\Omega$ that satisfy (\ref{hc_dispersion}) for some $\varkappa\in[-\pi, \pi).$ The oscillating solid line is the graph of the function
$f(\Omega)=\cos(\Omega/2)-\Omega\sin(\Omega/2)/4,$ which corresponds to setting $h=1/2,$ $a_2=1$ 
in the formula (\ref{hc_dispersion}). 
The square root of the spectrum is the union of the intervals indicated by bold lines, consisting of values $\Omega\in{\mathbb R}^+$ such that $|f(\Omega)|\le 1.$}
\label{limitspectrum} 
\end{figure}
\end{center}

The spectral intervals get narrower as $n\to\infty,$ while the dispersion curves $\Omega=\Omega_n(\varkappa)$ get flatter: it is straightforward to see that, labelling by $n$ the $(n+1)$st band, $n\in{\mathbb N},$ one has
\[
\Omega_n(\varkappa)=\frac{\sqrt{a_2}n\pi}{1-h}+\frac{2\sqrt{a_2}}{n\pi h }\bigl(1+(-1)^{n+1}\cos\varkappa\bigr)+O\biggl(\frac{1}{n^2}\biggr),\qquad n\to\infty.
\] 
Similarly, by differentiating (\ref{hc_dispersion}), it is shown that 
\[
\Omega_n'(\varkappa)=(-1)^n\frac{2\sqrt{a_2}}{n\pi h}\sin\varkappa+O\biggl(\frac{1}{n^2}\biggr),\qquad \Omega_n''(\varkappa)=(-1)^n\frac{2\sqrt{a_2}}{n\pi h}\cos\varkappa+O\biggl(\frac{1}{n^2}\biggr)\qquad n\to\infty.
\]

Furthermore, for the two eigenfunctions (\ref{eigenf}) corresponding to the eigenvalue $\Omega$ we obtain
\[
\sin\biggl(\dfrac{\Omega}{\sqrt{a_2}}(1-h)\biggr)=\frac{2(1-h)}{n\pi h}(-1)^{n+1}\bigl(1+(-1)^{n+1}\cos\varkappa\bigr)+O\biggl(\frac{1}{n^2}\biggr),\qquad n\to\infty,
\]
and  
\begin{equation}
C_{\varkappa, h}=\sqrt{(1-h)\bigl(1+(-1)^{n+1}\cos\varkappa\bigr)}+O\biggl(\frac{1}{n^2}\biggr),\qquad n\to\infty,
\label{ckappa}
\end{equation}

Consider the setup discussed at the end of Section \ref{asymptotic_form}: a slowly modulated $\varepsilon$-oscillatory wave described by an eigenfunction corresponding to a specified value of the quasimomentum $\varkappa_*,$ see (\ref{init_spec}), (\ref{point}). Using the formula (\ref{point_solution}) we infer that for $x\in \varepsilon(l, l+h),$ $l\in{\mathbb Z},$ {\it i.e.} on the ``stiff'' intervals, one has, as $\varepsilon\to0,$
\begin{equation}
\begin{aligned}
u^\varepsilon(x,t)\sim
(-1)^{n+1}\frac{f(0)}{a_2^{1/4}}&\sqrt{\frac{2(1-h)\bigl(1+(-1)^{n+1}\cos\varkappa_*\bigr)}{tn\pi h|\cos\varkappa_*|}}\\[0.6em]
&\times\exp\biggl[{\rm i}\biggl(\varkappa_*l\mp\dfrac{\Omega_n(\varkappa_*)t}{\varepsilon}\biggr)
-\dfrac{\pi}{4}{\rm sgn}\bigl(\Omega_n''(\varkappa_*)
\bigr)\biggr]\delta\bigl(x-\Omega_n'(\varkappa_*)t\bigr),
\end{aligned}
\label{form_stiff}
\end{equation}
which is a pulse of constant amplitude, proportional to $\sqrt{(1-h)/(tn\pi h)}.$
At the same time, for $x\in \varepsilon(l+h, l+1),$ $l\in{\mathbb Z},$ {\it i.e.} on the ``soft" intervals, one has, as $\varepsilon\to0,$
\begin{equation}
\begin{aligned}
u^\varepsilon(x,t)\sim
\frac{f(0)}{a_2^{1/4}}
&\sqrt{\frac{n\pi h}{2t|\cos\varkappa_*|\bigl(1+(-1)^{n+1}\cos\varkappa_*\bigr)}}
\\[0.6em]
&\times\biggl\{\sin\biggl(\dfrac{\Omega_n(\varkappa_*)}{\sqrt{a_2}}(1-x)\biggr)+\sin\biggl(\dfrac{\Omega_n(\varkappa_*)}{\sqrt{a_2}}(x-h)\biggr)\exp({\rm i}\varkappa_*)\biggr\}
\\[0.6em]
&\times\exp\biggl[{\rm i}\biggl(\varkappa_*l\mp\dfrac{\Omega_n(\varkappa_*)t}{\varepsilon}\biggr)
-\dfrac{\pi}{4}{\rm sgn}\bigl(\Omega_n''(\varkappa_*)
\bigr)\biggr]\delta\bigl(x-\Omega_n'(\varkappa_*)t\bigr),
\end{aligned}
\label{form_soft}
\end{equation}
{\it i.e.} a pulse with an oscillatory amplitude with maxima proportional to $\sqrt{n\pi h/((1-h)t).}$ In particular, for a fixed $\varkappa_*,$ the ratio of the maximal values of the pulse amplitude in the stiff and soft components is $2(1-h)/{n\pi h}.$ 

The formulae (\ref{form_stiff})--(\ref{form_soft}) show that by 
choosing
$n$ to be large large ({\it i.e.} high frequency of the transmission band), the amplitude of the travelling pulse can be reduced in the stiff intervals and amplified in the soft intervals. Furthermore, by choosing $n$ to be odd and tuning $\varkappa_*$ to be close to $\pi,$ or by choosing $n$ to be even and $\varkappa^*$ to be close to zero, a ``resonance'' occurs, where the maximum of the pulse amplitude in the soft component blows up to infinity while vanishing in the stiff component. 


\section*{Acknowledgements} The author is grateful for the financial support of the Engineering and Physical Sciences Research Council: Grant EP/L018802/2 ``Mathematical foundations of metamaterials: homogenisation, dissipation and operator theory''.
He is also grateful to the Isaac Newton Institute for Mathematical Sciences, Cambridge, for support and hospitality during the programme ``Periodic and Ergodic Spectral Problems'', where work on this paper was partially undertaken, to Professor Graeme Milton for suggesting the problem studied in this article, and to the Department of Mathematics, University of Utah, for hospitality. 

\end{document}